\documentclass{llncs}
  \usepackage{amssymb,amsmath}
  \usepackage{enumerate}
  \usepackage{graphicx}
  \usepackage{verbatim}


  \newcommand{\per}{\textsf{per}}
  
  \newcommand{\runs}{\rho}
  \newcommand{\cubicruns}{\rho_{cubic}}
  \newcommand{\rexp}{\sigma}
  \newcommand{\cubicrexp}{\sigma_{cubic}}

  \def\rdots{\mathinner{\ldotp\ldotp}}

  \date{}
  \author{\bf
    Maxime Crochemore\inst{1}\fnmsep\inst{3}
    \and
    Marcin Kubica\inst{2}
    \and
    Jakub Radoszewski%
    \thanks{%
      Some parts of this paper were written during the author's Erasmus exchange
      at King's College London
    }
    \inst{2}
    \and \\
    Wojciech Rytter\inst{2}\fnmsep\inst{5}
    \and
    Tomasz Wale\'n\inst{2}
  }

  \institute{
    King's College London, London WC2R 2LS, UK \\
    \email{maxime.crochemore@kcl.ac.uk}
    \and
    Dept.~of Mathematics, Computer Science and Mechanics, \\
    University of Warsaw, Warsaw, Poland\\
    \email{[kubica,jrad,rytter,walen]@mimuw.edu.pl}
    \and
    Universit\'e Paris-Est, France
    \and
    Dept. of Math. and Informatics,\\
    Copernicus University, Toru\'n, Poland
  }



  \title{
    On the Maximal Sum of Exponents\\ of Runs in a String
  }

\begin{document}
  \maketitle
  \begin{abstract}
    A run is an inclusion maximal occurrence in a string (as a subinterval)
    of a repetition $v$ with a period $p$ such that $2p \le |v|$.
    The exponent of a run is defined as $|v|/p$ and is $\ge 2$.
    We show new bounds on the maximal sum of exponents of runs in
    a string of length $n$.
    Our upper bound of $4.1\,n$ is better than the best previously known
    proven bound of $5.6\,n$ by Crochemore \& Ilie (2008).
    The lower bound of $2.035\,n$, obtained using a family of binary words,
    contradicts the conjecture of Kolpakov \& Kucherov (1999)
    that the maximal sum of exponents of runs in a string of length $n$
    is smaller than $2n$.
  \end{abstract}

  \section{Introduction}
    Repetitions and periodicities in strings are one of the fundamental topics in
    combinatorics on words \cite{Karhumaki,Lothaire}.
    They are also important in other areas: lossless compression, word representation, computational biology, etc.
    In this paper we consider bounds on the sum of exponents of repetitions that a string
    of a given length may contain.
    In general, repetitions are studied also from other points of view, like:
    the classification of words (both finite and infinite) not containing repetitions of a given exponent,
    efficient identification of factors being repetitions of different types
    and computing the bounds on the number of various types of
    repetitions occurring in a string.
    The known results in the topic and a deeper description of the
    motivation can be found in a survey by Crochemore et al.~\cite{Survey}.

    The concept of runs (also called maximal repetitions) has been introduced to
    represent all repetitions in a string in a succinct manner.
    The crucial property of runs is that their maximal number in a string of
    length $n$ (denoted as $\runs(n)$) is $O(n)$, see Kolpakov \& Kucherov \cite{KolpakovKucherov}.
    This fact is the cornerstone of any algorithm computing all repetitions in
    strings of length $n$ in $O(n)$ time.
    Due to the work of many people, much better bounds on $\runs(n)$ have been obtained.
    The lower bound $0.927\, n$ was first proved by Franek \& Yang \cite{Franek08}.
    Afterwards, it was improved by Kusano et al.~\cite{Matsubara} to $0.944565\, n$ employing computer experiments, 
    and very recently by Simpson~\cite{Simpson10} to $0.944575712\, n$.
    On the other hand, the first explicit upper bound $5\,n$ was settled by Rytter~\cite{Rytter06}, 
    afterwards it was systematically improved to $3.48\,n$ by Puglisi et al.~\cite{Puglisi08},
    $3.44\, n$ by Rytter~\cite{Rytter07},
    $1.6\, n$ by Crochemore \& Ilie~\cite{CrochemoreIlie,Crochemore08} and $1.52\, n$ by Giraud~\cite{Giraud08}.
    The best known result $\runs(n) \le 1.029\, n$ is due to Crochemore et 
    al.~\cite{DBLP:conf/cpm/CrochemoreIT08}, but it is conjectured
    \cite{KolpakovKucherov} that $\runs(n)<n$.
    Some results are known also for repetitions of exponent higher than 2.
    For instance, the maximal number of cubic runs (maximal repetitions with exponent at least 3)
    in a string of length $n$ (denoted $\cubicruns(n)$) is known to be between $0.406\,n$ and $0.5\,n$,
    see Crochemore et al.~\cite{Lata10}.

    A stronger property of runs is that the maximal sum of their exponents in a string
    of length $n$ (notation: $\rexp(n)$) is linear in terms of $n$, see Kolpakov \& Kucherov~\cite{KolpakovKucherovLORIA}.
    It has applications to the analysis of various algorithms, such as
    computing branching tandem repeats: the linearity of the sum of exponents
    solves a conjecture of \cite{Gusfield98} concerning the linearity of the number of maximal
    tandem repeats and implies that all can be found in linear time.
    For other applications, we refer to \cite{KolpakovKucherovLORIA}.
    The proof that $\rexp(n) < cn$ in Kolpakov and Kucherov's paper \cite{KolpakovKucherovLORIA} is very complex
    and does not provide any particular value for the constant $c$.
    A bound can be derived from the proof of Rytter \cite{Rytter06} but he mentioned only
    that the bound that he obtains is ``unsatisfactory'' (it seems to be $25\,n$).
    The first explicit bound $5.6\,n$ for $\rexp(n)$ was provided by Crochemore and Ilie \cite{Crochemore08},
    who claim that it could be improved to $2.9\,n$ employing computer experiments.
    As for the lower bound on $\rexp(n)$, no exact values were previously known and
    it was conjectured \cite{Kolpakov99,KolpakovKucherovLORIA} that $\rexp(n) < 2n$.

    In this paper we provide an upper bound of $4.1\,n$ on the maximal sum of exponents
    of runs in a string of length $n$ and also a stronger upper bound of $2.5\,n$ for the
    maximal sum of exponents of cubic runs in a string of length $n$.
    As for the lower bound, we bring down the conjecture $\rexp(n) < 2n$ by
    providing an infinite family of binary strings for which the sum of exponents of runs is greater than
    $2.035\,n$.

  \section{Preliminaries}
    We consider \emph{words} (\emph{strings}) $u$ over a finite alphabet $\Sigma$, $u \in \Sigma^*$;
    the empty word is denoted by $\varepsilon$;
    the positions in $u$ are numbered from $1$ to $|u|$.
    For $u=u_1u_2\ldots u_m$, let us denote by $u[i \rdots j]$ a \textit{factor}
    of $u$ equal to $u_i\ldots u_j$ (in particular $u[i]=u[i \rdots i]$).
    Words $u[1 \rdots i]$ are called prefixes of $u$, and words $u[i \rdots |u|]$ suffixes of $u$.

    We say that an integer $p$ is the (shortest) \emph{period} of a word
    $u=u_1\ldots u_m$ (notation: $p=\per(u)$) if $p$ is the smallest positive integer such
    that $u_i=u_{i+p}$ holds for all $1\le i\le m-p$.
    We say that words $u$ and $v$ are cyclically equivalent (or that one of them is a cyclic
    rotation of the other) if $u=xy$ and $v=yx$ for some $x, y \in \Sigma^*$.

    A \emph{run} (also called a maximal repetition) in a string $u$ is an interval
    $[i\rdots j]$ such that:
    \begin{itemize}
      \item
        the period $p$ of the associated factor $u[i\rdots j]$ satisfies
        $2p \le j-i+1$, 
      \item 
        the interval cannot be extended to the right nor to the left, without violating the above property, 
        that is, $u[i-1] \ne u[i+p-1]$ and $u[j-p+1] \ne u[j+1]$.
    \end{itemize}
    A \emph{cubic run} is a run $[i\rdots j]$ for which the
    shortest period $p$ satisfies $3p\le j-i+1$.
    For simplicity, in the rest of the text we sometimes refer to runs and cubic runs as
    to occurrences of the corresponding factors of $u$.
    The (fractional) \emph{exponent} of a run is defined as $(j-i+1)/p$.

    For a given word $u \in \Sigma^*$, we introduce the following notation:
    \begin{itemize}
      \item $\runs(u)$ and $\cubicruns(u)$ are the numbers of runs and cubic
      runs in $u$ resp.
      \item $\rexp(u)$ and $\cubicrexp(u)$ are the sums of exponents of runs and cubic
      runs in $u$ resp.
    \end{itemize}
    For a non-negative integer $n$, we use the same notations $\runs(n)$, $\cubicruns(n)$,
    $\rexp(n)$ and $\cubicrexp(n)$ to denote the maximal value of the respective function
    for a word of length $n$.

  \section{Lower bound for $\rexp(n)$}
    Tables \ref{fig:franek} and \ref{fig:padovan}
    list the sums of exponents of runs for several words of two known families
    that contain very large number of runs: the words $x_i$ defined by
    Franek and Yang \cite{Franek08} (giving the lower bound $\runs(n) \ge 0.927\,n$,
    conjectured for some time to be optimal)
    and the modified Padovan words $y_i$ defined by Simpson \cite{Simpson10}
    (giving the best known lower bound $\runs(n) \ge 0.944575712\,n$).
    These values have been computed experimentally.
    They suggest that for the families of words $x_i$ and $y_i$ the maximal sum of
    exponents could be less than $2n$.

    \begin{table}
      \begin{center}
      \begin{tabular*}{0.6\textwidth}{@{\extracolsep{\fill}}|r|r|r|r|r|}
      \hline
$i$ & $|x_i|$ & $\runs(x_i)/|x_i|$ & $\rexp(x_i)$ & $\rexp(x_i)/|x_i|$ \\\hline
$1$ & $6$ & $0.3333$ & $4.00$ & $0.6667$ \\\hline
$2$ & $27$ & $0.7037$ & $39.18$ & $1.4510$ \\\hline
$3$ & $116$ & $0.8534$ & $209.70$ & $1.8078$ \\\hline
$4$ & $493$ & $0.9047$ & $954.27$ & $1.9356$ \\\hline
$5$ & $2090$ & $0.9206$ & $4130.66$ & $1.9764$ \\\hline
$6$ & $8855$ & $0.9252$ & $17608.48$ & $1.9885$ \\\hline
$7$ & $37512$ & $0.9266$ & $74723.85$ & $1.9920$ \\\hline
$8$ & $158905$ & $0.9269$ & $316690.85$ & $1.9930$ \\\hline
$9$ & $673134$ & $0.9270$ & $1341701.95$ & $1.9932$ \\\hline
      \end{tabular*}
      \end{center}
      \caption{\label{fig:franek}
        Number of runs and sum of exponents of runs in Franek \& Yang's \cite{Franek08} words $x_i$.
      }
    \end{table}

    \begin{table}
      \begin{center}
      \begin{tabular*}{0.6\textwidth}{@{\extracolsep{\fill}}|r|r|r|r|r|}
      \hline
$i$ & $|y_i|$ & $\runs(y_i)/|y_i|$ & $\rexp(y_i)$ & $\rexp(y_i)/|y_i|$ \\\hline
$4$ & $37$ & $0.7568$ & $57.98$ & $1.5671$ \\\hline
$8$ & $125$ & $0.8640$ & $225.75$ & $1.8060$ \\\hline
$12$ & $380$ & $0.9079$ & $726.66$ & $1.9123$ \\\hline
$16$ & $1172$ & $0.9309$ & $2303.21$ & $1.9652$ \\\hline
$20$ & $3609$ & $0.9396$ & $7165.93$ & $1.9856$ \\\hline
$24$ & $11114$ & $0.9427$ & $22148.78$ & $1.9929$ \\\hline
$28$ & $34227$ & $0.9439$ & $68307.62$ & $1.9957$ \\\hline
$32$ & $105405$ & $0.9443$ & $210467.18$ & $1.9967$ \\\hline
$36$ & $324605$ & $0.9445$ & $648270.74$ & $1.9971$ \\\hline
$40$ & $999652$ & $0.9445$ & $1996544.30$ & $1.9972$ \\\hline
      \end{tabular*}
      \end{center}
      \caption{\label{fig:padovan}
        Number of runs and sum of exponents of runs in Simpson's \cite{Simpson10} modified Padovan words $y_i$.
      }
    \end{table}

  We show, however, a lower bound for $\rexp(n)$ that is greater than $2n$.

  \begin{theorem}
    There are infinitely many binary strings $w$ such that
    $$\frac{\rexp(w)}{|w|} > 2.035.$$
  \end{theorem}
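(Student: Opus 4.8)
The plan is to exhibit the family as powers $w_k = u^k$ of a single short primitive binary word $u$, found by computer search, and to show that $\rexp(w_k)/|w_k|$ converges to a constant strictly above $2.035$ as $k\to\infty$. The point of using powers rather than a single run-rich word (such as the $x_i$ or $y_i$ of Tables~\ref{fig:franek} and \ref{fig:padovan}, whose ratios stay below $2$) is that $u^k$ carries one \emph{global} run spanning the whole string, and this single run of large exponent contributes exactly the extra term needed to cross the barrier of $2$; the gain is a one-time boost of roughly $1/|u|$, so $u$ must be kept short.

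Concretely, write $m=|u|$ and assume $u$ is primitive. First I would prove the structural lemma that, for $k\ge 2$, the runs of $u^k$ split into two kinds: (i) the unique run $[1\rdots km]$ of shortest period $m$ and exponent $k$; and (ii) runs of shortest period $p<m$. That (i) is genuinely a run with $\per=m$ uses the primitivity of $u$; the key step is to rule out any run of shortest period $p$ with $m<p\le km/2$, which follows from the Fine and Wilf periodicity argument: a factor of $u^k$ of length $\ge 2p\ge p+m$ already has period $m$, so its shortest period divides $\gcd(p,m)\le m<p$, a contradiction. Hence the only large-period run is the global one.

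For the short runs (ii) I would argue by periodicity of the infinite word $u^\infty$: every run of period $p<m$ recurs in each of the $k$ copies, up to truncation at the two ends, so if $D$ denotes the sum of exponents of the short runs per period of $u^\infty$, then $\rexp(u^k)=k(D+1)+O(1)$, where the $+1$ per copy accounts for the global run of exponent $k$. Dividing by $|u^k|=km$ gives $\lim_{k\to\infty}\rexp(w_k)/|w_k|=(D+1)/m$, so the theorem reduces to producing a primitive binary $u$ for which $(D+1)/m>2.035$; since the boost $1/m$ must outweigh the deficit of $D/m$ below $2$, this forces $u$ to be a short word whose periodic extension already has short-run exponent density $D/m$ very close to $2$.

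The main obstacle is precisely this last step: correctly enumerating \emph{all} short runs of $u^\infty$ together with their exact maximal extents — in particular the runs that straddle the boundary between consecutive copies of $u$, which is exactly why the count is performed in $u^\infty$ rather than in $u$ alone — and thereby computing $D$ exactly, while confirming that $u$ is primitive so that the global run really has exponent $k$. This verification is finite, as it only concerns the single period $u$ and its immediate neighbours, and I would carry it out and present it as a computer-checked enumeration; all of the analytic content is contained in the structural lemma above.
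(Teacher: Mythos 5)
Your structural lemma is essentially correct, and it is in fact exactly what is needed to make rigorous the one step the paper waves through with ``Clearly'': the paper's family is also a sequence of powers, $w=(w_8)^k$, and your decomposition (for primitive $u$ of length $m$, the runs of $u^k$ are the single global run of period $m$ and exponent $k$, plus the period-$<m$ runs inherited from $u^\infty$, whence $\rexp(u^k)=k(D+1)+O(1)$ and the limit density is $(D+1)/m$) is a clean way to justify that powering preserves the density. The genuine gap is in the step you yourself call the main obstacle: producing the base word $u$. Your search strategy is built on the premise that $D/m<2$ for every $u$, so that only the global run's boost of $1/m$ can push the limit past $2.035$; this forces $m<1/0.035<29$, and simultaneously $D/m>2.035-1/m>1.93$ (indeed $>1.99$ for $m$ near $28$). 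No binary word of length at most $28$ comes anywhere near periodic run-exponent density $2$: the paper's own tables show that even the most run-rich known words of lengths $27$, $31$, $37$ have densities $1.45$, $1.52$, $1.57$, and density close to $2$ is reached only by words of length in the hundreds or thousands. Small-period periodic words simply do not have enough scales of periodicity to pack exponent density near $2$, so the finite computer search you prescribe would come up empty and the proof cannot be completed along these lines.

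The premise is also self-defeating: the theorem being proved is precisely that densities above $2$ exist, and that is how the paper proceeds. It constructs, via the iterated morphisms $\phi$ and $\psi$, a long word $w_8$ with $|w_8|=362327$ whose \emph{finite} sum-of-exponents density is already $2.0352>2.035$ (verified computationally), and then takes powers only to obtain infinitely many strings; for this base word the global run contributes a negligible $1/|w_8|\approx 3\cdot 10^{-6}$, so the ``one-time boost'' plays no role at all. Once you drop the assumption $D/m<2$, your reduction ``find a primitive $u$ with $(D+1)/m>2.035$'' remains valid, but it is then essentially the original problem: the real work is constructing a (necessarily long) word of density above $2.035$, e.g.\ by a morphic construction like the paper's, not a finite enumeration over a single short period and its neighbours.
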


  \begin{proof}
    Let us define two morphisms $\phi:\{a,b,c\}\mapsto\{a,b,c\}$ and $\psi:\{a,b,c\}\mapsto\{0,1\}$ as follows:
    $$\phi(a)=baaba, \quad \phi(b)=ca, \quad \phi(c)=bca$$
    $$\psi(a)=01011, \quad \psi(b)=\psi(c)=01001011$$
    We define $w_i=\psi(\phi^i(a))$.
    Table~\ref{fig:lower} shows the sums of exponents of runs in words $w_i$, computed experimentally.

    Clearly, for any word $w=(w_8)^k$, $k \ge 1$, we have
    $$\frac{\rexp(w)}{|w|} > 2.035.$$

    \begin{table}
      \begin{center}
      \begin{tabular*}{0.5\textwidth}{@{\extracolsep{\fill}}|r|r|r|r|}
      \hline
$i$ & $|w_i|$ & $\rexp(w_i)$ & $\rexp(w_i)/|w_i|$ \\\hline
$1$ & $31$ & $47.10$ & $1.5194$ \\\hline
$2$ & $119$ & $222.26$ & $1.8677$ \\\hline
$3$ & $461$ & $911.68$ & $1.9776$ \\\hline
$4$ & $1751$ & $3533.34$ & $2.0179$ \\\hline
$5$ & $6647$ & $13498.20$ & $2.0307$ \\\hline
$6$ & $25205$ & $51264.37$ & $2.0339$ \\\hline
$7$ & $95567$ & $194470.30$ & $2.0349$ \\\hline
$8$ & $362327$ & $737393.11$ & $2.0352$ \\\hline
$9$ & $1373693$ & $2795792.39$ & $2.0352$ \\\hline
$10$ & $5208071$ & $10599765.15$ & $2.0353$ \\\hline
      \end{tabular*}
      \end{center}
      \caption{\label{fig:lower}
        Sums of exponents of runs in words $w_i$.
      }
    \end{table}

  \qed
  \end{proof}

  \section{Upper bounds for $\rexp(n)$ and $\cubicrexp(n)$}
    In this section we utilize the concept of \emph{handles} of runs as
    defined in \cite{Lata10}.
    The original definition refers only to cubic runs, but here we extend it also to ordinary runs.

    Let $u \in \Sigma^*$ be a word of length $n$.
    Let us denote by $P=\{p_1,p_2,\ldots,p_{n-1}\}$ the set of inter-positions in $u$ 
    that are located \emph{between} pairs of consecutive letters of $u$.
    We define a function $H$ assigning to each run $v$ in $u$ 
    a set of some inter-positions within $v$ (called later on \emph{handles}) --- 
    $H$ is a mapping from the set of runs occurring in $u$ to the set $2^P$ of subsets of $P$.
    Let $v$ be a run with period $p$ and let $w$ be the prefix of $v$ of length $p$.
    Let $w_{\min}$ and $w_{\max}$ be the minimal and maximal words (in lexicographical order)
    cyclically equivalent to $w$.
    $H(v)$ is defined as follows:
    \begin{enumerate}[a)]
      \item
      if $w_{\min}=w_{\max}$ then $H(v)$ contains all inter-positions
      within $v$,
      \item
      if $w_{\min} \ne w_{\max}$ then $H(v)$ contains inter-positions between
      consecutive occurrences of $w_{\min}$ in $v$ and between consecutive
      occurrences of $w_{\max}$ in $v$.
    \end{enumerate}
    Note that $H(v)$ can be empty for a non-cubic-run $v$.

    \begin{figure}[th]
    \begin{center}
      \includegraphics[width=7cm]{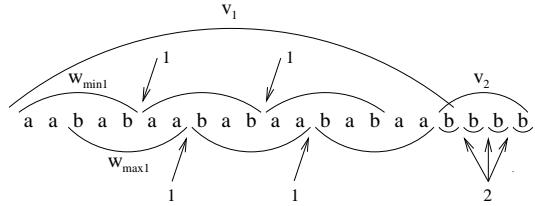}
      \caption{\label{f:handles_ex}
          An example of a word with two highlighted runs $v_1$ and $v_2$.
          For $v_1$ we have $w_{\mbox{\scriptsize min1}} \ne w_{\mbox{\scriptsize max1}}$ and for $v_2$
          the corresponding words are equal to $b$ (a one-letter word).
          The inter-positions belonging to the sets $H(v_1)$ and $H(v_2)$
          are pointed by arrows
      }
    \end{center}
    \end{figure}

    Proofs of the following properties of handles of runs can be found in \cite{Lata10}:
    \begin{enumerate}
      \item Case (a) in the definition of $H(v)$ implies that $|w_{\min}|=1$.
      \item $H(v_1) \cap H(v_2) = \emptyset$ for any two distinct runs $v_1$ and $v_2$ in $u$.
    \end{enumerate}

    To prove the upper bound for $\rexp(n)$, we need to state an additional property of
    handles of runs.
    Let $\mathcal{R}(u)$ be the set of all runs in a word $u$, and let
    $\mathcal{R}_1(u)$ and $\mathcal{R}_{\ge 2}(u)$ be the sets of runs with period
    1 and at least 2 respectively.

    \begin{lemma}
      \label{lem:propH}~

    \noindent
      If $v \in \mathcal{R}_1(u)$ then
      $\rexp(v) = |H(v)|+1$.\\
      If $v \in \mathcal{R}_{\ge 2}(u)$ then
      $\lceil\rexp(v)\rceil \le \frac{|H(v)|}2+3$.
    \end{lemma}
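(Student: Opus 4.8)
The plan is to split along the two cases of the definition of $H$, which correspond exactly to the two cases of the lemma. First I would invoke the cited property~1: case~(a), where $w_{\min}=w_{\max}$, forces $|w|=p=1$, so it occurs precisely for $v\in\mathcal{R}_1(u)$, while case~(b) is exactly the situation $v\in\mathcal{R}_{\ge 2}(u)$. This lets me handle each bullet of the lemma with the matching branch of $H$.

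For $v\in\mathcal{R}_1(u)$ the argument is a direct count. Such a run is $v=c^{L}$ for a single letter $c$, so $\per(v)=1$ and $\rexp(v)=L=|v|$. By case~(a), $H(v)$ consists of \emph{all} inter-positions strictly inside $v$, of which there are exactly $L-1$. Hence $|H(v)|=L-1$ and $\rexp(v)=L=|H(v)|+1$, giving the first claim with equality.

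For $v\in\mathcal{R}_{\ge 2}(u)$, write $e=\rexp(v)=|v|/p$ and let $w=v[1\rdots p]$; since $p$ is the shortest period, $w$ is primitive, so its $p$ cyclic rotations are pairwise distinct. Let $s_{\min}\ne s_{\max}$ in $\{0,\dots,p-1\}$ be the offsets with $w_{\min}=v[s_{\min}+1\rdots s_{\min}+p]$ and $w_{\max}=v[s_{\max}+1\rdots s_{\max}+p]$. The key structural step is the primitivity/synchronization property: a length-$p$ factor of the periodic word $v$ equals $w_{\min}$ iff its starting position is $\equiv s_{\min}+1 \pmod p$, and similarly for $w_{\max}$. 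From this I would (i) count the occurrences of $w_{\min}$ in $v$ as $N_{\min}=\lfloor e-s_{\min}/p\rfloor$ and likewise $N_{\max}=\lfloor e-s_{\max}/p\rfloor$ (both $\ge 1$ since $e\ge 2$), (ii) note that the single boundary inter-position contributed by each consecutive pair of $w_{\min}$-occurrences sits at a position $\equiv s_{\min}\pmod p$, and analogously $\equiv s_{\max}\pmod p$ for $w_{\max}$, so the two families are disjoint, yielding $|H(v)|=(N_{\min}-1)+(N_{\max}-1)=N_{\min}+N_{\max}-2$.

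The remaining step is arithmetic. Using $s/p<1$ and $e\ge\lfloor e\rfloor$ one gets $e-s/p>\lfloor e\rfloor-1$, hence $N_{\min},N_{\max}\ge\lfloor e\rfloor-1$; therefore $|H(v)|\ge 2\lfloor e\rfloor-4$ and $\tfrac{|H(v)|}{2}+3\ge\lfloor e\rfloor+1\ge\lceil e\rceil$, which is the desired inequality. I expect the main obstacle to be the structural step for case~(b): justifying rigorously, via primitivity of $w$, both that the occurrences of $w_{\min}$ (resp.\ $w_{\max}$) are exactly the phase-aligned ones---so the floor count of $N_{\min},N_{\max}$ is exact---and that the handle inter-positions arising from $w_{\min}$ and from $w_{\max}$ never coincide. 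Once $|H(v)|=N_{\min}+N_{\max}-2$ is pinned down, the floor/ceiling estimate is routine; I would also remark that the worst case ($s_{\min},s_{\max}$ both exceeding $|v|\bmod p$, with $e\notin\mathbb{Z}$) attains equality, which is what forces the additive constant $3$.
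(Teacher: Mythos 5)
Your proposal is correct and follows essentially the same route as the paper: the first case is the same direct count, and for $v \in \mathcal{R}_{\ge 2}(u)$ the paper likewise extracts handles from consecutive occurrences of $w_{\min}$ and $w_{\max}$, simply observing that $w_{\min}^k$ and $w_{\max}^k$ with $k=\lfloor\rexp(v)\rfloor-1$ are factors of $v$, hence $|H(v)| \ge 2(\lfloor\rexp(v)\rfloor-2)$, followed by the same floor/ceiling arithmetic. The only difference is that you prove an exact formula $|H(v)| = N_{\min}+N_{\max}-2$ via the synchronization property of primitive words, where the paper settles for the one-line lower bound (leaving the disjointness of the two handle families implicit); your extra rigor is harmless but not a different method.
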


    \begin{proof}
      For the case of $v \in \mathcal{R}_1(u)$, the proof is straightforward from the definition of handles.
      In the opposite case, it is sufficient to note that both words
      $w_{\min}^k$ and $w_{\max}^k$ for $k=\lfloor\rexp(v)\rfloor-1$ are factors of $v$, and thus
      $$|H(v)| \ge 2\cdot(\lfloor\rexp(v)\rfloor-2).$$
      \qed
    \end{proof}

    Now we are ready to prove the upper bound for $\rexp(n)$.
    In the proof we use the bound $\runs(n) \le 1.029\,n$ on the number of runs from \cite{DBLP:conf/cpm/CrochemoreIT08}.

    \begin{theorem}\label{thm:upper_bound}
      The sum of the exponents of runs in a string of length $n$ is less than $4.1\,n$.
    \end{theorem}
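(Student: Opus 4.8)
The plan is to bound the total sum of exponents by splitting the set of runs into those with period~$1$ and those with period~$\ge 2$, and to apply Lemma~\ref{lem:propH} to each class. First I would write
\[
\rexp(u) = \sum_{v \in \mathcal{R}_1(u)} \rexp(v) + \sum_{v \in \mathcal{R}_{\ge 2}(u)} \rexp(v),
\]
and use the crude bound $\rexp(v) \le \lceil \rexp(v) \rceil$ on every run in the second class so that Lemma~\ref{lem:propH} becomes directly applicable. For a period-$1$ run the lemma gives $\rexp(v) = |H(v)| + 1$ exactly, while for a period-$\ge 2$ run it gives $\rexp(v) \le \lceil \rexp(v)\rceil \le \tfrac{|H(v)|}{2} + 3$; in both cases the additive constant ($1$ or $3$) is charged once per run, and the handle term is charged against the cardinality of $H(v)$.

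The key structural input is property~2 of handles, namely that the sets $H(v)$ are pairwise disjoint subsets of the $n-1$ inter-positions $P$. Summing the handle terms over all runs therefore yields $\sum_v |H(v)| \le |P| = n-1 < n$, since each inter-position is counted by at most one run. Combining this with the per-run additive constants, I would bound
\[
\rexp(u) \;<\; \sum_{v \in \mathcal{R}_1(u)} \bigl(|H(v)| + 1\bigr) + \sum_{v \in \mathcal{R}_{\ge 2}(u)} \Bigl(\tfrac{|H(v)|}{2} + 3\Bigr).
\]
Since the handle sum coefficient is at most $1$ in either class, the combined handle contribution is at most $\sum_v |H(v)| < n$. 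The remaining additive part is $\sum_{v} c_v$ where $c_v \in \{1,3\}$, which is at most $3\,\runs(u)$, and here I would invoke the upper bound $\runs(n) \le 1.029\,n$ from \cite{DBLP:conf/cpm/CrochemoreIT08}.

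Putting the pieces together gives roughly $\rexp(u) < n + 3\cdot 1.029\,n = n + 3.087\,n = 4.087\,n < 4.1\,n$, which is the claimed bound. The main obstacle, and where care is needed, is the handling of the ceiling and the additive constants: the step $\rexp(v) \le \lceil\rexp(v)\rceil$ loses less than $1$ per run but must be absorbed cleanly, and I must make sure the additive constant $3$ (rather than a smaller value) is genuinely charged only once per run and that the coefficient on the shared handle sum never exceeds~$1$ so that the single bound $\sum_v |H(v)| < n$ suffices for both classes simultaneously. A secondary subtlety is to verify that period-$1$ runs, whose constant is only $1$, do not need the weaker $\tfrac{1}{2}$ handle coefficient, so treating all runs uniformly with coefficient $1$ on $|H(v)|$ and constant $3$ is a safe over-estimate that still lands below $4.1\,n$.
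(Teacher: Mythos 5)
Your proposal is correct and follows essentially the same route as the paper: the same split into $\mathcal{R}_1(u)$ and $\mathcal{R}_{\ge 2}(u)$, the same application of Lemma~\ref{lem:propH}, the disjointness of handles to bound the total handle contribution by $n$, and the bound $\runs(n)\le 1.029\,n$ to charge the additive constants, yielding $4.087\,n<4.1\,n$. The only cosmetic difference is that you uniformly bound the handle coefficient by $1$, whereas the paper keeps $A+B/2<A+B<n$ explicit; the estimates are identical.
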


    \begin{proof}
      Let $u$ be a word of length $n$.
      Using Lemma \ref{lem:propH}, we obtain:
      \begin{eqnarray}
      \nonumber
        \sum_{v \in \mathcal{R}(u)} \rexp(v) & = & \sum_{v \in \mathcal{R}_1(u)} \rexp(v) + \sum_{v \in \mathcal{R}_{\ge 2}(u)} \rexp(v) \\
      \nonumber
        & \le\ & \sum_{v \in \mathcal{R}_1(u)} \left(|H(v)|+1\right) + \sum_{v \in \mathcal{R}_{\ge 2}(u)} \left(\frac{|H(v)|}2+3\right) \\
      \nonumber
        & =\ & \sum_{v \in \mathcal{R}_1(u)} |H(v)| + |\mathcal{R}_1(u)| + \sum_{v \in \mathcal{R}_{\ge 2}(u)} \frac{|H(v)|}2 + 3\cdot|\mathcal{R}_{\ge 2}(u)| \\
      \label{eq:Ru}
        & \le\ & 3\cdot|\mathcal{R}(u)| + A + B/2,
      \end{eqnarray}
      where $A=\sum_{v \in \mathcal{R}_1(u)} |H(v)|$ and $B=\sum_{v \in \mathcal{R}_{\ge 2}(u)} |H(v)|$.
      Due to the disjointness of handles of runs (the second property of handles), $A+B<n$, and
      thus, $A+B/2<n$.
      Combining this with \eqref{eq:Ru}, we obtain:
      $$
        \sum_{v \in \mathcal{R}(u)} \rexp(v)\ <\ 3\cdot|\mathcal{R}(u)|+n\ \le\ 3\cdot\runs(n)+n\ \le\ 3\cdot1.029\,n+n\ <\ 4.1\,n.
      $$
      \qed
    \end{proof}

    A similar approach for cubic runs, this time using the bound of $0.5\,n$ for $\cubicruns(n)$ from \cite{Lata10},
    enables us to immediately provide a stronger upper bound for the function $\cubicrexp(n)$.

    \begin{theorem}
      The sum of the exponents of cubic runs in a string of length $n$ is less than $2.5\,n$.
    \end{theorem}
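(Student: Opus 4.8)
The plan is to replay the argument of Theorem~\ref{thm:upper_bound} almost verbatim, restricting attention to the sub-collection of cubic runs and substituting the sharper counting bound $\cubicruns(n)\le 0.5\,n$ from \cite{Lata10} in place of $\runs(n)\le 1.029\,n$. The key observation that makes this possible is that none of the machinery built for ordinary runs breaks when the non-cubic runs are discarded: Lemma~\ref{lem:propH} is stated and proved for a single run, so its two inequalities hold for every cubic run exactly as written, and the handle-disjointness property (the second property of handles) holds for any subfamily of runs, in particular for the cubic ones.

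First I would let $\mathcal{C}(u)$ denote the set of cubic runs of $u$ and split it, as before, into the runs of period $1$ and the runs of period at least $2$. Applying Lemma~\ref{lem:propH} term by term, together with $\rexp(v)\le\lceil\rexp(v)\rceil$, gives
\[
  \cubicrexp(u)\ \le\ \sum_{\text{period }1}\bigl(|H(v)|+1\bigr) + \sum_{\text{period}\ge 2}\Bigl(\tfrac{|H(v)|}{2}+3\Bigr),
\]
and collecting the constant terms exactly as in the derivation of \eqref{eq:Ru} bounds the right-hand side by $3\cdot|\mathcal{C}(u)| + A' + B'/2$, where $A'$ and $B'$ are the handle-size sums over the period-$1$ and period-$\ge 2$ cubic runs respectively.

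Next I would invoke handle disjointness. Since the handles of distinct runs are disjoint subsets of the $n-1$ inter-positions, we have $A'+B'<n$, hence $A'+B'/2<n$. Finally I would substitute the cubic-run count bound $|\mathcal{C}(u)|\le\cubicruns(n)\le 0.5\,n$, which yields
\[
  \cubicrexp(u)\ <\ 3\cdot|\mathcal{C}(u)| + n\ \le\ 3\cdot 0.5\,n + n\ =\ 2.5\,n.
\]

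Since the computation is mechanical, the only point that genuinely needs checking --- and the step I would treat most carefully --- is that Lemma~\ref{lem:propH} and the disjointness property really do transfer to the restricted family. Both do, because the lemma is a statement about an individual run while disjointness concerns pairs of runs, so throwing runs away can only preserve or help these properties; the entire improvement over Theorem~\ref{thm:upper_bound} comes from the smaller bound on the number of runs of the relevant kind rather than from any new combinatorial input.
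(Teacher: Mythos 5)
Your proposal is correct and follows essentially the same route as the paper: the paper's own proof is exactly this replay of Theorem~\ref{thm:upper_bound} with $\mathcal{R}_{cubic}(u)$ in place of $\mathcal{R}(u)$ and the bound $\cubicruns(n)\le 0.5\,n$ from \cite{Lata10} substituted at the end. Your extra care in noting that Lemma~\ref{lem:propH} is a per-run statement and that handle disjointness survives restriction to any subfamily is precisely the (implicit) justification the paper relies on when it says ``using same inequalities.''
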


    \begin{proof}
      Let $u$ be a word of length $n$.
      Using same inequalities as in the proof of Theorem \ref{thm:upper_bound}, we obtain:
      $$\sum_{v \in \mathcal{R}_{cubic}(u)} \rexp(v) \ <\ 3\cdot|\mathcal{R}_{cubic}(u)|+n
        \ \le\ 3\cdot\cubicruns(n)+n\ \le\ 3\cdot0.5\,n+n\ =\ 2.5\,n,$$
      where $\mathcal{R}_{cubic}(u)$ denotes the set of all cubic runs of $u$.
      \qed
    \end{proof}

  \bibliographystyle{abbrv}
  \bibliography{exprun}

\end{document}